\newcommand{\R}{\mathbb{R}}
\newcommand{\bD}{\mathbf{D}}
\newcommand{\mP}{\mathbb{P}}
\newcommand{\mE}{\mathbb{E}}
\newcommand{\md}{\,{\rm d}}
\newcommand{\w}{\wedge}
\newcommand{\one}{1\mkern-5mu{\hbox{\rm I}}}
\theoremstyle{break}
\newtheorem{Def}{Definition}[section]
\newtheorem{Bem}[Def]{Remark}
\newtheorem{Prop}[Def]{Proposition}
\newtheorem{Bsp}[Def]{Example}
\newenvironment{proof}{\noindent{\textit{Proof:}}}{%
\unskip\nobreak\hfil\penalty50\hskip1em\null\nobreak
$\Box$
\parfillskip=\z@\finalhyphendemerits=0\endgraf\bigskip}
\let\oldendBsp\endBsp
\def\endBsp{\unskip\nobreak\hfil\penalty50\hskip1em\null\nobreak\hfil%
$\blacksquare$\parfillskip=\z@\finalhyphendemerits=0\endgraf\oldendBsp}
\let\oldendBem\endBem
\def\endBem{\unskip\nobreak\hfil\penalty50\hskip1em\null\nobreak\hfil%
$\blacksquare$\parfillskip=\z@\finalhyphendemerits=0\endgraf\oldendBem}
\author{\small\sc  Julia Eisenberg\footnote{email: jeisenbe@fam.tuwien.ac.at} and Paul Kr\"uhner\smallskip\\\footnotesize Institute of Mathematical Methods in Economics, Vienna University of Technology.}
\date{}
\title{A Note on the Optimal Dividends Paid in a Foreign Currency}
\begin{document} 
\maketitle 
\begin{abstract}\noindent
We consider an insurance entity endowed with an initial capital and a surplus process modelled as a Brownian motion with drift. It is assumed that the company seeks to maximise the cumulated value of expected discounted dividends, which are declared or paid in a foreign currency. The currency fluctuation is modelled as a L\'evy process. We consider both cases: restricted and unrestricted dividend payments. It turns out that the value function and the optimal strategy can be calculated explicitly.

\vspace{6pt}
\noindent
\\{\bf Key words:} optimal control, Hamilton--Jacobi--Bellman equation, L\'evy processes, dividends, foreign currency
\settowidth\labelwidth{{\it 2010 Mathematical Subject Classification: }}%
                \par\noindent {\it 2010 Mathematical Subject Classification: }%
                \rlap{Primary}\phantom{Secondary}
                93B05\newline\null\hskip\labelwidth
                Secondary 49L20

\end{abstract}
\section{Introduction}
In the time of low interest rates, investors seek for other sources of income, e.g.\ dividend payments. Indeed, a dividend is a payout that a company can make to its stockholders. Since, the dividends typically reflect the company's earnings, the return on investment can exceed the riskless yields available on the market considerably. 

Usually, investors seek to diversify their risks and invest in companies abroad. On the other hand, large companies expand to other countries and offer their shareholders the possibility to pay out the dividends in the local currency. This exposes investors to the additional risk of currency fluctuations. 

In order to show, how an investor can be affected by exchange rate movements, let us consider the following example. BP plc is one of the world's biggest oil and gas companies, and therefore one of the most popular companies to invest in. On the company's website (www.bp.com) one finds the following information: ``BP declares the dividend in US dollars. BP ordinary shareholders will receive the dividend in sterling and the amount they receive each quarter may vary as a result of changing foreign exchange rates.''
Since December 2014, the quarterly dividend payments of BP for ordinary shareholders equals $0.10$\textdollar{} per share. However, the actual payments in pound sterling amounted for example to $0.066342$\textsterling{} on the 18.12.2015 and to $0.063769$\textsterling{} on the 19.12.2014 (confer www.bp.com/content/dam/bp/pdf/investors/bp-cash-dividends-ordinary-shareholders.pdf). An increase of more than $4\%$ is due to the strengthening of the American dollar.
\\And vice versa, if the company's earnings/dividends are growing or at least not decreasing but its domestic currency (in our case pound sterling) increases versus dollar, the dividend payments become less worth than in US. For instance, in 2009 the quarterly dividend payments of BP for ordinary shareholders equalled $0.14$\textdollar{}/share (confer www.nasdaq.com/symbol/bp/dividend-history), which corresponded to $0.09584$\textsterling{} on the 08.06.2009 and to $0.08503$\textsterling{} on the 08.09.2009, showing a decrease of more than $11\%$ just within a 3-month period.   
\\To mention an example from the insurance/reinsurance industry: Swiss Re declares its regular dividend, paid annually, in CHF. That is, the dividend income of an investor from EU would depend on the exchange rate CHF/EU like described above.

The dividend maximisation problem with different constraints has a long history. Shreve, Lehoczky and Gaver \cite{shreve} maximised the expected discounted dividends up to ruin minus some penalty for a general diffusion process as a surplus process of the considered insurance entity. Asmussen and Taksar \cite{astak} found an explicit solution for a Brownian motion with drift as a surplus.
Hubalek and Schachermayer \cite{hubschach} looked at the problem under a utility function applied on the dividend rates, Grandits et al. \cite{ghsz} maximised the expected utility of the present value of dividends. Since this paper, is not supposed to be a survey on the topic, we refer for example to Albrecher and Thonhauser \cite{AlbThReview} or to Schmidli \cite{HS} and references therein.

Consider now an insurance company seeking to maximise the expected discounted dividends, whereas the dividends are declared or paid in a foreign currency. The surplus of the considered company is assumed to follow a Brownian motion with drift, and the factor describing the currency fluctuations is given by a geometric L\'evy process, which is the basic set-up for practical applications, confer for instance \cite[p. 452]{brigo}.
We consider the cases where the dividend payments are restricted or unrestricted and also allow for dependency between the surplus process and currency movements.

To the best of our knowledge, the problem of dividend maximisation if the dividends are paid in a foreign currency has not been considered before. Usually, one assumes that the discounting/preference rate is a positive deterministic constant. In \cite{eis}, the dividends were maximised under the assumption of a geometric Brownian motion as a discounting factor. Since, a Brownian motion is a L\'evy process, the model considered in the present manuscript is a generalisation of the one considered in \cite{eis}.
Similar to \cite{eis}, we are able to calculate the value function and the optimal strategy explicitly. The results are illustrated by examples.

\section{The Value Function and the Optimal Strategy}
\noindent
Consider an insurance company whose surplus is given by a Brownian motion with drift $X_t=x+\mu t+\sigma W_t$, where $\{W_t\}$ is a standard Brownian motion $\mu,\sigma>0$. The insurance company is allowed to pay out dividends at any amount up to the current surplus at any time, where the accumulated dividends until $t$ are given by $\bD_t$, yielding for the ex-dividend surplus $X^{\bD}$:
\[
X_t^\bD=x+\mu t+\sigma W_t-\bD_t\;.
\]
The consideration will be stopped at the ruin time of $X^\bD$, i.e.\ at time $\tau^\bD:=\inf\{t\geq0: X^\bD(t)= 0\}$. We merely assume that the underlying filtration $\{\mathcal F_t\}$ is complete, right-continuous and that $W$ is an $\mathcal F$-Brownian motion. In this paper we consider too sets of admissible dividend strategies. 
At first, we allow just for dividend strategies $\mathbf {D_t}=\int_0^t u_s\md s$ with $u$ progressively measurable and $0\le u_t\le \xi$ for some boundary $\xi>0$. Moreover, we denote the class of those dividend strategies by $\mathcal A_r$ and we sometimes write $\bD=\{u_t\}$ to indicate that $\bD_t=\int_0^t u_s\md s$.
Later, we allow for unrestricted dividend pays at any time, i.e.\ $D$ is allowed to be any right-continuous increasing and adapted process with $\Delta D_t \leq X_t-D_{t-}$ and we denote the class of those processes by $\mathcal A_u$.

As a risk measure, we consider the value of expected discounted dividends, where the dividends are discounted by a constant preference rate $\delta\in\mathbb R$, i.e. the discounting factor is given by $e^{-\delta t}$ for $t\ge 0$. Since we assume, that the considered insurance company generates its income in a foreign currency but pays dividends in its home currency we need to model the exchange rate as well. The price of one unit of the foreign currency is modelled by a geometric L\'{e}vy process, i.e.\ the exponential of a L\'{e}vy process $L$ starting in some level $l\in\mathbb R$, cf.\ \cite{sato.99}. As usual we assume that $L$ is an $\mathcal F$-L\'evy process. Moreover, for the moment we assume that the L\'evy process $L$ is independent of the Brownian motion $W$ driving the surplus process $X$; for dependencies confer further Remark \ref{r:depndency}. 
We define the return function corresponding to some admissible strategy $\bD=\{u_t\}$ to be
\[
V^{\bD}(l,x):=\mE\Big[\int_0^{\tau^\bD} e^{-\delta t-L_t}u_t \md t\Big],
\]
for $(l,x)\in\R\times\R_+$. Since an investor seeks to maximise the return he tries to find an admissible strategy $\bD^*$ such that
\begin{align}
  V^{\bD^*}(l,x) &= \sup_{\bD\in\mathcal A_r} V^\bD(l,x). \label{e:control problem}
\end{align}

Note that for an arbitrary strategy $\bD$ we have
\begin{equation}
\begin{split}
  V^\bD(l,x) &= \mE\Big[\int_0^{\tau^\bD} e^{-\delta t-L_t}u_t \md t\Big]
           = e^{-l} \mE\Big[\int_0^{\tau^\bD} e^{-\delta t-(L_t-l)}u_t \md t\Big]
           = e^{-l} V^\bD(0,x) 
\end{split}
\label{e:l factors}
\end{equation}
which greatly simplifies the dependency on the current level $l\in\mathbb R$ of the log-exchange rate.

Let us introduce some further notation, namely, let $(A,\nu,\gamma)$ be the L\'evy-Khintchine triplet associated with the L\'evy process $L$, cf.\ \cite[Definition 8.2]{sato.99}. We also define an {\em artificial preference rate}
 $$ \beta := \delta - \frac{A}{2} - \int_{\mathbb R} (e^{-h} - 1+ h1_{\{|h|\leq 1\}}) \nu(dh) + \gamma $$
in case that $\int_{-\infty}^{-1} e^{-x} \nu(dx) < \infty$ and $\beta := -\infty$ otherwise. Note that the control problem \eqref{e:control problem} is well posed if and only if $\beta>0$. Indeed, assume $-\infty <\beta\le 0$, choose $\tilde \bD=\{\frac\mu 2\}$ and denote by $\tau^{\tilde \bD}$ the ruin time of the surplus process $X_t^{\tilde \bD}$ under the dividend strategy $\tilde \bD$. Then, confer for instance Borodin and Salminen \cite[p. 295]{bs}, one has $\mP\big[\tau^{\tilde \bD} =\infty\big]=1-e^{-\mu x}$,
implying for $x>0$ $$V(x)\ge V^{\tilde \bD}(x)=\frac \mu2\mE\Big[\int_0^{\tau^{\tilde \bD}}e^{-\beta t}\md t\Big]=\infty,$$ 
which shows that the control problem \eqref{e:control problem} is not well-posed if $\beta\leq 0$. The case $\beta=-\infty$ can be treated similarly. Thus, in the following we assume $\beta>0$.
\\The HJB equation corresponding to the problem is given by
\begin{equation*}
\begin{split}
&\frac A2 V_{ll}(l,x) + \int \left(V(l+h,x) - V(l,x) - h1_{\{|h|\leq 1\}}V_l(l,x) \right)\nu(dh) 
\\&\quad - \gamma V_{l}(l,x) + \mu V_x(l,x) + \frac{\sigma^2}2V_{xx}(l,x)  - \delta V(l,x)+\sup\limits_{u\in[0,\xi]}u\big\{e^{-l}-V_x(l,x)\big\} = 0
\end{split}
\end{equation*}
which by Equation \eqref{e:l factors} simplifies to
\begin{align}
\mu V_x(0,x) + \frac{\sigma^2}2V_{xx}(0,x)-\beta V(0,x)+\sup\limits_{u\in[0,\xi]}u\big\{1-V_x(0,x)\big\} = 0.\label{hjb:2}
\end{align}
Equation \eqref{hjb:2} has been studied by several authors and the solution is discussed in detail in \cite[pp. 97--104]{HS}. We summarise the implications in Proposition \ref{main statement} below.

In case of unrestricted dividend payments, the decision maker seeks to maximise $V^D$ for $D\in\mathcal A_u$, i.e.\ we want to find an admissible strategy $D^*\in\mathcal A_u$ such that
 $$ V^{D^*}(l,x) = \sup_{D\in\mathcal A_u} V^D(l,x). $$
As in Equation \eqref{e:l factors}, we get $V^D(l,x)=e^{-l}V^D(0,x)$ for any strategy $D\in\mathcal A_u$. The corresponding HJB reads
\begin{align*}
\max\Big\{&\frac A2 V_{ll}(l,x) + \int \left(V(l+h,x) - V(l,x) - h1_{\{|h|\leq 1\}}V_l(l,x) \right)\nu(dh) \\&\quad - \gamma V_{l}(l,x) + \mu V_x(l,x) + \frac{\sigma^2}2V_{xx}(l,x)  - \delta V(l,x), e^{-l} - V_x(l,x)\Big\} = 0
\end{align*}
and simplifies to
\begin{align}
 \max\Big\{&\mu V_x(0,x) + \frac{\sigma^2}2V_{xx}(0,x)-\beta V(0,x),1 - V_x(0,x)\Big\} = 0.\label{hjb:simple unrestricted}
\end{align}
Again, this equation has been treated extensively in \cite[p. 102]{HS}.

Define now the following auxiliary quantities
\begin{equation}
\begin{split}
&\theta:=\frac{-\mu+\sqrt{\mu^2+2\beta\sigma^2}}{\sigma^2}\quad\quad \mbox{and}\quad\quad \zeta:=\frac{-\mu-\sqrt{\mu^2+2\beta\sigma^2}}{\sigma^2}; \\&\eta:=\frac{\xi-\mu-\sqrt{(\xi-\mu)^2+2\beta\sigma^2}}{\sigma^2}\;.
\end{split}
\label{constants}
\end{equation}
\begin{Prop}\label{main statement}
In the restricted case, the value function $V$ solves HJB Equation \eqref{hjb:2} and is given by $V(l,x)=e^{-l}F(x)$, where
\begin{align*}
F(x):=\begin{cases}
\frac\xi\beta\big(1-e^{\eta x}\big) & \beta\ge-\xi\eta,
\\\frac{e^{\theta x}-e^{\zeta x}}{\theta e^{\theta  x_r}-\zeta e^{\zeta x_r}} & \mbox{$\beta<-\xi\eta$ and $x\le x_r$,}
\\\frac{\xi}{\beta}+\frac1\eta e^{\eta(x- x_r)} & \mbox{$\beta<-\xi\eta$ and $x> x_r$}\;,
\end{cases}
\end{align*}
and the optimal strategy $\bD^*=\{u^*_s\}$ is
\[
u^*_s(x)=\xi \one_{[X_s^{\bD^*}>x_r]}\;,
\]
with 
\begin{align*}
x_r:
=\frac{\ln\Big(1-\zeta\big(\frac\xi\delta+\frac{1}{\eta}\big)\Big)-\ln\Big(1-\theta\big(\frac\xi\delta+\frac{1}{\eta}\big)\Big)}{\theta-\zeta}=\frac{\ln\Big(\frac{\zeta^2-\eta\zeta}{\theta^2-\eta\theta}\Big)}{\theta-\zeta}\;.
\end{align*}
In the unrestricted case, the value function $V$ solves HJB Equation \eqref{hjb:simple unrestricted} and is given by $V(l,x)=e^{-l}G(x)$, where
\[
G(x):=\begin{cases}
\frac{e^{\theta x}-e^{\zeta x}}{\theta e^{\theta  x_u}-\zeta e^{\zeta  x_u}} & x\le  x_u
\\\frac{\mu}{\beta}+x- x_u & x> x_u
\end{cases},
\]
with the optimal strategy $D^*_t=\max\{\sup\limits_{0\le s\le t\w\tau^{D^*}} X_t-x_u,0\}$, and
\[
x_u:=\frac{\ln\Big(\frac{\zeta^2}{\theta^2}\Big)}{\theta-\zeta}\;.
\]
\end{Prop}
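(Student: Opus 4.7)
The plan is to reduce the two-dimensional control problem to the standard one-dimensional dividend problem with effective discount rate $\beta$, read off the explicit solution from the classical literature, and finally verify optimality via an It\^o argument.

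By \eqref{e:l factors} the value function factorises as $V(l,x)=e^{-l}F(x)$ with $F(x)=V(0,x)$, so it suffices to determine $F$ and to verify that the proposed strategy attains it. Substituting this ansatz into the full integro-differential HJB, with $V_l=-e^{-l}F$, $V_{ll}=e^{-l}F$, $V(l+h,x)-V(l,x)=e^{-l}(e^{-h}-1)F$, $V_x=e^{-l}F'$, $V_{xx}=e^{-l}F''$, the $l$-dependent operator collapses to multiplication by $-\beta$ after division by $e^{-l}$, precisely by the definition of the artificial preference rate. Hence the full HJB reduces to \eqref{hjb:2}, and an analogous substitution reduces the unrestricted HJB to \eqref{hjb:simple unrestricted}.

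Equation \eqref{hjb:2} is the classical HJB for the dividend problem of a drifted Brownian surplus with discount rate $\beta$, solved in \cite[pp.~97--104]{HS}. The construction is the usual guess-and-verify: exponential solutions $e^{\theta x},e^{\zeta x}$ of the homogeneous ODE $\tfrac{\sigma^2}{2}F''+\mu F'-\beta F=0$ are glued to a particular solution of the payment-region ODE $\tfrac{\sigma^2}{2}F''+(\mu-\xi)F'-\beta F+\xi=0$, namely $\tfrac{\xi}{\beta}+C e^{\eta x}$, at a free boundary $x_r$. Imposing $F(0)=0$ together with $C^1$-fit at $x_r$ yields a linear system whose unique solution gives the stated closed-form expressions for $F$ and for $x_r$; the dichotomy $\beta\gtrless-\xi\eta$ records whether this free boundary lies in the positive half-line or whether the whole half-line is already payment region. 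The unrestricted case is handled analogously, the payment region being replaced by the affine profile $\tfrac{\mu}{\beta}+x-x_u$ with $C^2$-fit at $x_u$.

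To identify this $F$ with the value function I would run a verification. For arbitrary $\bD=\{u_t\}\in\mathcal A_r$, It\^o's formula applied to $e^{-\delta t}V(L_t,X_t^\bD)$ together with the full HJB yields that
\[
M_t:=e^{-\delta(t\w\tau^\bD)}V(L_{t\w\tau^\bD},X_{t\w\tau^\bD}^\bD)+\int_0^{t\w\tau^\bD} e^{-\delta s-L_s}u_s\md s
\]
is a local supermartingale, and a local martingale under the feedback $u_t^*=\xi\one_{[X_t^{\bD^*}>x_r]}$. Since $F$ is bounded and $\mE[e^{-\delta t-L_t}]=e^{-l-\beta t}$ is integrable thanks to $\beta>0$, localisation and the limit $t\to\infty$ (using $F(0)=0$ to dispose of the terminal term at ruin) give $V(l,x)\ge V^\bD(l,x)$, with equality at $\bD^*$. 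The main obstacle is the unrestricted case: the candidate $D^*$ has a jump at $t=0$ when $x>x_u$ and otherwise reflects $X^{D^*}$ at $x_u$, so one must use It\^o for general semimartingales and check that the lump-sum inequality $G(x)-G(y)\ge x-y$ for $y\le x$ (a direct consequence of $G'\ge 1$) ensures the HJB inequality is preserved across the jumps of $D$.
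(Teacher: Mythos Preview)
Your proposal is correct and follows the same route as the paper: reduce the two-dimensional HJB to the one-dimensional equation with discount $\beta$ and then invoke Schmidli \cite[pp.~97--104]{HS}. The paper's own proof is in fact just the one-line citation ``See \cite[pp.~101,103]{HS}'', since the reduction to \eqref{hjb:2} and \eqref{hjb:simple unrestricted} has already been carried out in the text preceding the proposition; your write-up simply unpacks more of what is contained in that reference, including the verification step.
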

\begin{proof}
See \cite[pp. 101,103]{HS}.
\end{proof}
Let us make a simple example showing that the optimisation problem can be well-posed even in the presence of a negative preference rate $\delta$.
\begin{figure}[t]
\includegraphics[scale=0.6, bb = -50 280 200 500]{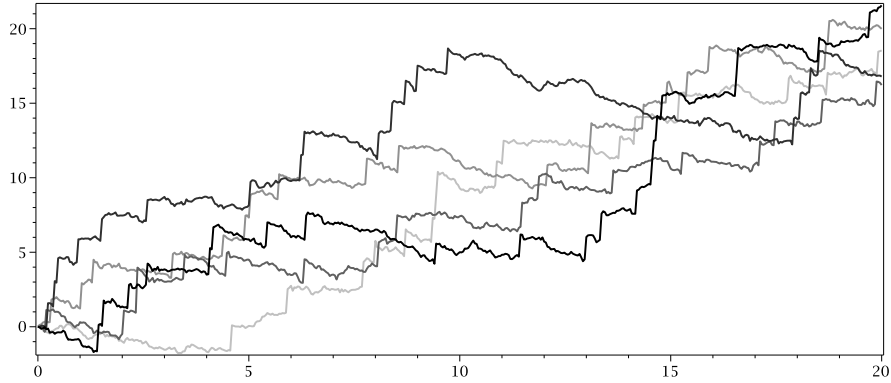}
\caption{Realisations of the process $\{L_t+\delta t\}$ with a negative preference rate $\delta$.\label{fig.1}}
\end{figure}
\begin{Bsp}
Let $\delta:=-1/4$, $A=1/2$, $b_1:=\log(5)$, $b_2:=\log(5/4)$, $N^{(1)},N^{(2)}$ be independent standard Poisson processes, i.e.\ they are Poisson processes with jump-intensity $1$ and jump-height $1$, and let $B$ be a standard Brownian motion independent of $W$, where $B,N^{(1)}$ and $N^{(2)}$ are $\mathcal F$-adapted. Define
$$L_t := \sqrt{A}\ B_t + b_1N^{(1)}_t - b_2N^{(2)}_t,\quad t\geq0.$$
Then, $\beta = 1/20$ and hence Proposition \ref{main statement} yields that the optimisation problem is well-posed. Sample paths simulations suggest that the paths of $L_t+\delta t$ tend to $\infty$ as $t\rightarrow\infty$ as one expects and which can be easily verified, cf. Figure \ref{fig.1}.
\end{Bsp}
An other example is to use a normal inverse Gaussian (NIG) process for the exchange rate. 
\begin{Bsp}
  Let $L$ be an NIG process with parameters $(\sigma^2,\theta,\kappa)=(0.19,0,1)$ in the sense of \cite[Table  4.5]{RamaCpeterT} and assume that $\delta=0.5$. Then $\beta = \delta + \frac{1}{\kappa}\left(1-\sqrt{1-\sigma^2\kappa+2\theta\kappa}\right) = 0.6 > 0$ and hence Proposition \ref{main statement} yields that the optimisation problem is well posed. Sample paths simulations for $\{L_t+\delta t\}$ can be found in Figure \ref{fig.2}.
\end{Bsp}
\begin{Bem}
Let now the constants defined in \eqref{constants} and correspondingly the optimal barriers $x_r$ and $x_u$ be functions of $\beta$, where we substitute $\beta$ by a non-negative variable $y$. It is straightforward to show that $x_r(y)$ and $x_u(y)$ are strictly decreasing in $y$. 
In particular, it means that the dividend barrier should be adjusted to the expectations of the insurance company concerning the future development of the exchange rate.
\end{Bem}
\begin{figure}[t]
\includegraphics[scale=0.45,angle = -90, bb = 100 -40 550 350]{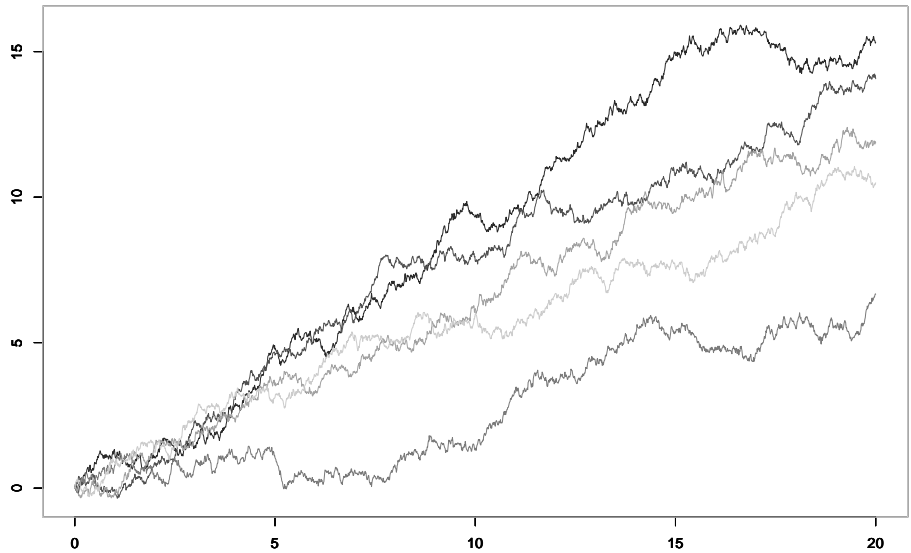}
\caption{Realisations of the process $\{L_t+\delta t\}$ with $\delta=0.5$ and an NIG process $L$ with parameters $(0.19,0,1)$.\label{fig.2}}
\end{figure}

\begin{Bem}[Incorporating dependencies between $L$ and $W$.]\label{r:depndency}
 It seems to be a natural extension to allow for dependencies between the L\'evy process $L$ and the Brownian motion $W$. 
 
However, the only way to introduce dependencies between a L\'evy process $L$ and a Brownian motion $W$, adapted to the same filtration, is to let the continuous martingale part $L^c$ of the L\'evy process $L$ depend on the Brownian motion $W$. The remainder $L^r:=L-L^c$ is a L\'evy process again, independent of $(L^c,W)$, and $L^c$ is a scaled Brownian motion. A natural way would be to correlate $L^c$ and $W$ so that $(W,L^c)$ is a two-dimensional Brownian motion. Then, $L^r$ can be removed from the consideration by adjusting $\delta$ as before. The remaining problem has been solved in \cite{shreve} and \cite{astak}.
\end{Bem}
\subsection*{Acknowledgements}
\noindent
The research of the first author was supported by the Austrian Science Fund, grant P26487.

\end{document}